\DeclareMathOperator{\C}{C}
\newtheorem{theorem}{Theorem}[section]
\newtheorem{definition}[theorem]{Definition}
\newtheorem{lemma}[theorem]{Lemma}
\newtheorem{proposition}[theorem]{Proposition}
\newtheorem{remark}[theorem]{Remark}
\title{On Boolean posets of numerical events} 
\author{Dietmar Dorninger and Helmut L\"anger}
\date{}
\begin{document}
\footnotetext[1]{Support of the research of the second author by \"OAD, project CZ~02/2019, is gratefully acknowledged.}
\maketitle
\begin{abstract}
Let $S$ be a set of states of a physical system and $p(s)$ the probability of the occurrence of an event when the system is in state $s\in S$. A function $p\colon S\rightarrow[0,1]$ is called a numerical event or alternatively, an $S$-probability. If a set $P$ of $S$-probabilities is ordered by the order of real functions it becomes a poset which can be considered as a quantum logic. In case $P$ is a Boolean algebra this will indicate that the underlying physical system is a classical one. The goal of this paper is to study sets of $S$-probabilities which are not far from being Boolean algebras, especially by means of the addition and comparison of functions that occur in these sets. In particular, certain classes of Boolean posets of $S$-probabilities are characterized and related to each other and descriptions based on sets of states are derived.
\end{abstract}
 
{\bf AMS Subject Classification:} 06C15, 03G12, 81P16

{\bf Keywords:} Quantum logic, numerical event, Boolean poset, set of states 

\section{Introduction}

In axiomatic quantum mechanics orthomodular partially ordered sets and generalizations of them are considered as ``quantum logics'' that determine the behaviour of a physical system. In particular, if the quantum logic is a Boolean algebra then one will have reason to assume that one deals with a classical physical system. The elements of a quantum logic can also be interpreted as events, and a Boolean algebra then as the equivalent of a classical field of events as known from probability theory.

Having this in mind we first recall the notion of a {\em numerical event} (cf.\ \cite{BM91}, \cite{BM93} and \cite{MT}).

Let $S$ be a set of states of a physical system and $p(s)$ the probability of the occurrence of an event, when the system is in state $s\in S$. The function $p$ from $S$ to $[0,1]$ is called a numerical event, or alternatively more precisely an {\em $S$-probability}. If ordered by the order $\leq$ of functions and as the case may be endowed with some further properties a set of $S$-probabilities becomes a partially ordered set (poset) that can be conceived as a quantum logic. In this paper we study different kinds of such quantum logics, especially those that are not far away from being Boolean algebras. For this end we provide the following notions.

Let $P$ be a set of $S$-probabilities including the constant functions $0$ and $1$, partially ordered by the order of functions. We will call $p,q\in P$ {\em disjoint}, in symbols $p\wedge q=0$, if $x\leq p,q$ for $x\in P$ implies $x=0$. Further, $p+q$ and $p-q$ shall denote the sum and difference of $p$ and $q$, respectively, considered as real functions.

\begin{definition}\label{def1}
A set $P$ of $S$-probabilities is called {\em specific} if
\begin{enumerate}[{\rm(1)}]
\item $0,1\in P$,
\item if $p\in P$ then $p':=1-p\in P$,
\item if $p,q\in P$ and $p\wedge q=0$ then $p+q\in P$.
\end{enumerate}
\end{definition}

Condition (2) seems natural in respect to dealing with probabilities, and as for (3), this condition is motivated by classical fields of events (yet for the time being limited to considering the sum of disjoint events). Conceiving such a field of events as a Boolean ring $\mathbf R$ of subsets of some set $\Omega$, with $\boldsymbol+$ the addition in $\mathbf R$ one has $A\boldsymbol{+}B=(A\cap B^c)\cup(A^c\cap B)$ for $A,B\in\mathbf R$, where $\cup$ and $\cap$ stand for the set-theoretic join and meet (i.e.\ for union and intersection, respectively), $^c$ indicates complements in $\mathbf R$ and $\Omega$ has the role of the unity $1$ of $\mathbf R$. If $A$ and $B$ are disjoint, $A\boldsymbol{+}B=A\cup B\in\mathbf R$. Further, we observe that due to $\mathbf R$ having characteristic $2$, i.e.\ $\boldsymbol+$ will be the same as $\boldsymbol-$, $1\boldsymbol{-}A=\Omega\boldsymbol{-}A=\Omega\boldsymbol{+}A=A^c\in\mathbf R$ in coincidence with condition (2).

Two $S$-probabilities $p$ and $q$ are called {\em orthogonal}, in symbols $p\perp q$, if $p\leq q'$. From condition (3) follows that $p\wedge q=0$ for $p,q\in P$ implies $p\perp q$. For orthoposets (which specific sets of $S$-probabilities in general are not) this property is known to be {\em Boolean} (cf.\ e.g.\ \cite{T91}; in connection with orthomodular posets see i.a.\ \cite{MT} and \cite{NP}). We extend this definition to posets $(P,\leq)$ with an {\em antitone involution}, i.e.\ a mapping $'$ from $P$ to $P$ such that $p\leq q$ implies $p'\geq q'$ for $p,q\in P$ and $(p')'=p$ for $p\in P$.

\begin{definition}
A poset $P$ with an antitone involution is called {\em Boolean}, if $p\wedge q=0$ implies $p\perp q$ for $p,q\in P$.
\end{definition}

According to this definition specific sets of $S$-probabilities are Boolean posets.
 
Writing $p\vee q$ for the supremum of two elements $p,q$ of a set $P$ of $S$-probabilities and denoting their infimum by $p\wedge q$ we further point out

\begin{remark}\label{rem1}
Let $P$ be a set of $S$-probabilities satisfying {\rm(1)} and {\rm(2)} and let $p,q\in P$. Then De Morgan's laws hold in $P$ in the following sense: If $p\vee q$ exists in $P$ then $p'\wedge q'$ exists in $P$ and $(p\vee q)'=p'\wedge q'$, and if $p\wedge q$ exists in $P$ then $p'\vee q'$ exists in $P$ and $(p\wedge q)'=p'\vee q'$.
\end{remark}

\begin{proof}
Let $s\in P$. First assume $p\vee q$ to exist in $P$. Then $(p\vee q)'\leq p',q'$. If $s\leq p',q'$ then $s'\geq p,q$ which means $s'\geq p\vee q$ from which we infer $s\leq(p\vee q)'$. Hence $p'\wedge q'$ exists in $P$ and $p'\wedge q'=(p\vee q)'$. The second assertion follows by duality.
\end{proof}

Finally, we recall the definitions of two structures of numerical events which we will later relate to specific sets of numerical events.

\begin{definition}\label{def2}
A set $P$ of $S$-probabilities is called a {\em generalized field of events} {\rm(}in short {\rm GFE)} {\rm(}cf.\ {\rm\cite{D12})}, if it satisfies {\rm(1)}, {\rm(2)} and
\begin{enumerate}
\item[{\rm(4)}] if $p,q\in P$ and $p\perp q$ then $p+q\in P$.
\end{enumerate}
If a {\rm GFE} satisfies
\begin{enumerate}
\item[{\rm(5)}] if $p,q,r\in P$ and $p\perp q\perp r\perp p$ then $p+q+r\in P$,
\end{enumerate}
then it is called an {\em algebra of $S$-probabilities} {\rm(}cf.\ {\rm\cite{BM91}} and {\rm\cite{BM93})}.
\end{definition} 

Condition (4) is a special case of condition (5) -- just assume $r$ to be $0$.

The goal of this paper is to characterize various classes of specific sets of $S$-probabilities, investigate their interrelations and closeness to Boolean algebras, and indicate when they will actually be Boolean algebras. Moreover, we will consider the question whether (small) sets of $S$-probabilities will belong to a Boolean subalgebra of a specific set of $S$-probabilities and we will characterize specific sets of $S$-probabilities by states.

\section{Specific sets of varying numerical events}

\begin{definition}\label{def3}
An $S$-probability $p$ is called {\em varying}, if $p$ is neither $\leq1/2$ nor $\geq1/2$ unless $p=0$ or $p=1$.
\end{definition}

The elements of an algebra of $S$-probabilities are varying (cf.\ e.g.\ \cite{DDL}), the elements of GFEs in general are not.

As for data won by experiments: That an $S$-probability is varying often comes up independently or can be achieved by adding further experimental data directed to this purpose.

Now we will turn our attention to specific sets of $S$-probabilities that are varying. An $S$-probability is called {\em complementary} if $p\wedge p'=0$ (which by Remark~\ref{rem1} is equivalent to $p\vee p'=1$). A set $P$ of $S$-probabilities with $0$ and $1$ will be called {\em complemented} if all of its elements are complementary. Further we recall that a poset $P$ with complementation $'$ which is an antitone involution is called an {\em orthoposet}.

\begin{proposition}\label{prop1}
A specific set $P$ of varying $S$-probabilities has the following properties:
\begin{enumerate}[{\rm(i)}]
\item $P$ is complemented and hence an orthoposet,
\item if $p,q\in P$ and $p\perp q$ then $p\wedge q=0$,
\item $P$ is a {\rm GFE}.
\end{enumerate}
\end{proposition}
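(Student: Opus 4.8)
The plan is to prove the three assertions in the order (i), (ii), (iii), since they build on one another: part (i) is the only place where the \emph{varying} hypothesis genuinely enters, part (ii) is a short consequence of (i) together with the symmetry of orthogonality, and part (iii) follows at once by invoking closure condition (3) of specificity. The one substantive step is (i); the rest merely repackages it.

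For (i) I would fix $p\in P$ and verify $p\wedge p'=0$ straight from the definition of disjointness. Suppose $x\in P$ satisfies $x\le p$ and $x\le p'=1-p$. Reading these as pointwise inequalities of real functions gives $x(s)\le p(s)$ and $x(s)\le 1-p(s)$ for every $s\in S$, whence $x(s)\le\min\bigl(p(s),1-p(s)\bigr)\le 1/2$; thus $x\le 1/2$ everywhere. But $x$ lies in $P$ and is therefore varying, so by Definition~\ref{def3} the inequality $x\le 1/2$ forces $x=0$ or $x=1$, and since $x\le 1/2$ excludes $x=1$ we obtain $x=0$. This shows $P$ is complemented. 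To conclude that $P$ is an orthoposet I would observe that $'$ is an antitone involution (immediate from $p\le q\Rightarrow 1-p\ge 1-q$ and $(p')'=p$) and that, by Remark~\ref{rem1}, $p\wedge p'=0$ is equivalent to $p\vee p'=1$, so $'$ is a genuine complementation.

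For (ii) I would use that orthogonality is symmetric: if $p\perp q$, i.e.\ $p\le q'$, then equivalently $q\le p'$. Now let $x\in P$ with $x\le p$ and $x\le q$; combining $x\le q$ with $q\le p'$ gives $x\le p'$, so $x$ is a lower bound of both $p$ and $p'$. Part (i) then forces $x=0$, and hence $p\wedge q=0$.

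Part (iii) is then a formality. Being specific, $P$ satisfies (1) and (2), which are the first two requirements of a GFE, so only condition (4) remains to be checked. Given $p,q\in P$ with $p\perp q$, part (ii) yields $p\wedge q=0$, and condition (3) of Definition~\ref{def1} then delivers $p+q\in P$; thus (4) holds and $P$ is a GFE. The crux of the whole argument is the translation in (i) of the order-theoretic condition $x\le p,p'$ into the pointwise bound $x\le 1/2$, after which the varying hypothesis reads off $x=0$ and the remaining parts cascade.
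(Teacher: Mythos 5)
Your proof is correct and takes essentially the same route as the paper: the same decomposition (i) then (ii) then (iii), with the varying hypothesis entering only in (i) via the pointwise comparison with $1/2$. The only cosmetic differences are that in (i) you establish $p\wedge p'=0$ directly (any common lower bound is $\leq 1/2$, hence $0$), whereas the paper dually shows $p\vee p'=1$ (a common upper bound $r$ satisfies $r'\leq p\leq r$, hence $r=1$) and then invokes Remark~\ref{rem1}, and in (ii) you pass through $x\leq p,p'$ where the paper uses $r\leq q,q'$.
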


\begin{proof}
Let $p,q,r\in P$.
\begin{enumerate}[(i)]
\item If $r\geq p,p'$ then $r'\leq p\leq r$, from which we infer $r=1$ because of $r$ being a varying $S$-probability. Therefore $p\vee p'=1$ and hence $p\wedge p'=0$.
\item If $p\perp q$ and $r\leq p,q$ then because of $p\leq q'$ we have $r\leq q,q'$ and, since $P$ being complemented, $r=0$ showing $p\wedge q=0$.
\item If $p\perp q$ then $p\wedge q=0$ according to (ii), and by condition (3), $p+q\in P$.
\end{enumerate}
\end{proof}

\begin{remark}\label{rem2}
Let $P$ be a set of $S$-probabilities satisfying {\rm(1)} and {\rm(2)}. Then all elements of $P$ are varying if and only if $P$ is complemented.
\end{remark}

\begin{proof}
If all elements of $P$ are varying then $P$ is complemented according to the proof of Proposition~\ref{prop1} (i). Conversely, assume $P$ to be complemented. Let $p\in P$. If $p\leq1/2$ then $p\leq p'$ and hence $p=p\wedge p'=0$. Dually, if $p\geq1/2$ then $p'\leq p$ and we get $p=p\vee p'=1$. This proves that every element of $P$ is varying.
\end{proof}

An orthoposet that allows a representation by a collection $\Delta$ of subsets of a set $\Omega$ such that
\begin{itemize}
\item $\emptyset,\Omega\in\Delta$,
\item if $A\in\Delta$ then $\Omega\setminus A\in\Delta$,
\item if $A,B\in\Delta$ and $A\cap B=\emptyset$ then $A\cup B\in\Delta$,
\end{itemize}
is called a {\em concrete logic} (cf.\ \cite P).

\begin{theorem}\label{th1}
The specific sets of varying $S$-probabilities are exactly the complemented Boolean {\rm GFEs}. They all are concrete logics.
\end{theorem}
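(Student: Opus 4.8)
The plan is to treat the two assertions separately, deriving the identification from the results already in hand and reserving the genuine work for the representation claim.

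For the identification of the specific sets of varying $S$-probabilities with the complemented Boolean GFEs I would prove both inclusions. If $P$ is specific and all its elements are varying, then Proposition~\ref{prop1} already yields that $P$ is complemented (hence an orthoposet) and a GFE, while the fact noted after the definition of a Boolean poset shows any specific set to be Boolean: from $p\wedge q=0$ condition~(3) gives $p+q\in P$, and since $p+q$ is again an $S$-probability we have $p+q\le1$, i.e.\ $p\le q'$, which is $p\perp q$. Conversely, let $P$ be a complemented Boolean GFE. Conditions~(1) and~(2) hold because $P$ is a GFE, and then Remark~\ref{rem2} upgrades ``complemented'' to ``all elements varying''. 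It remains to check condition~(3): if $p\wedge q=0$ the Boolean property gives $p\perp q$, whence condition~(4) of a GFE gives $p+q\in P$. Thus $P$ is specific, and the two classes coincide.

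For the concrete-logic assertion I would first observe that the three closure conditions in the definition of a concrete logic say exactly that $\Delta$, ordered by inclusion and equipped with set complement, is itself a complemented Boolean GFE whose $S$-probabilities are the characteristic functions $\chi_A$ (here $A\cap B=\emptyset$ is precisely $\chi_A\perp\chi_B$, and then $\chi_{A\cup B}=\chi_A+\chi_B$). Hence the task reduces to producing, for a given specific set $P$ of varying $S$-probabilities, a set $\Omega$ and an orthoposet isomorphism of $P$ onto such a family of subsets; equivalently, a family $\{m_\omega\}_{\omega\in\Omega}$ of \emph{two-valued states} on the orthoposet $P$, i.e.\ maps $m_\omega\colon P\to\{0,1\}$ with $m_\omega(1)=1$, $m_\omega(p')=1-m_\omega(p)$ and $m_\omega(p+q)=m_\omega(p)+m_\omega(q)$ whenever $p\perp q$, that is \emph{order-determining} in the sense that $m_\omega(p)\le m_\omega(q)$ for all $\omega$ forces $p\le q$. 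From such a family one recovers the representation via $A_p=\{\omega\in\Omega:m_\omega(p)=1\}$.

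The obvious temptation, $\Omega=S$ with $A_p=\{s:p(s)>1/2\}$, is monotone and sends $0,1$ to $\emptyset,\Omega$, but it fails: on the ``equator'' $\{s:p(s)=1/2\}$ both $p$ and $p'$ are borderline, so $\Omega\setminus A_p$ and $A_{p'}$ need not agree, as already the four-element orthoposet $\{0,p,p',1\}$ built from a varying function attaining the value $1/2$ shows; moreover a naive threshold is not order-determining, since two incomparable elements may round to the same subset. The plan is therefore to construct the two-valued states from the orthoposet structure itself rather than from single evaluations, exploiting that in $P$ disjointness and orthogonality coincide (Proposition~\ref{prop1}(ii) together with the Boolean property), so that maximal systems of pairwise disjoint nonzero elements play the role of ultrafilter data and each can be completed, by a maximality argument respecting complementation $p\leftrightarrow p'$ and additivity on orthogonal pairs, to a two-valued state. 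The main obstacle is precisely to guarantee that enough such states exist to be order-determining, i.e.\ the \emph{fullness} of the family; this is the heart of the matter and is exactly where all three hypotheses — complemented, Boolean and varying — are indispensable, and where the framework for concrete logics of \cite{P} is brought to bear.
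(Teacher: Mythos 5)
Your first half is correct and coincides with the paper's argument: Proposition~\ref{prop1}(i),(iii) gives that a specific set of varying $S$-probabilities is a complemented GFE, the observation after the definition of Boolean posets (condition (3) plus $p+q\leq1$ gives $p\leq q'$) makes it Boolean, and conversely Remark~\ref{rem2} recovers ``varying'' from ``complemented'' while the Boolean property turns $p\wedge q=0$ into $p\perp q$ so that (4) yields (3). No complaints there.

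The gap is in the second assertion. The paper does not prove concreteness from scratch: it invokes the known theorem that Boolean orthoposets are concrete logics (Tkadlec \cite{T91}, resting on the Navara--Pt\'ak argument for Boolean orthomodular posets, which does not use orthomodularity \cite{NP}), so all that remains for the paper is to note that a complemented Boolean GFE is a Boolean orthoposet. Your proposal instead attempts a direct construction, correctly reduces the task to exhibiting an order-determining family of two-valued states, and then stops exactly at the decisive step: you assert that a maximal system of pairwise disjoint nonzero elements can be ``completed, by a maximality argument'' to a two-valued state, and you explicitly defer the fullness of the resulting family as ``the heart of the matter.'' Neither claim is substantiated, and neither is routine: a maximal pairwise-disjoint family does not by itself determine a map $P\to\{0,1\}$ that is additive on orthogonal pairs (the obstruction is global consistency across all of $P$, not local choices), two-valued states on orthoposets can fail to exist at all without the Boolean hypothesis, and making the family order-determining is precisely the nontrivial content of the Navara--Pt\'ak theorem. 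So as written the concreteness claim is unproven; you should either carry out that construction in full or simply cite \cite{T91} and \cite{NP} as the paper does. A small additional inaccuracy: ``varying'' is not a third independent hypothesis needed for concreteness --- by Remark~\ref{rem2} it is equivalent to complementedness under (1) and (2), and the cited theorem needs only ``Boolean orthoposet.''
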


\begin{proof}
Let $P$ be a set of $S$-probabilities and $p,q\in P$. First assume $P$ to be a specific set of varying $S$-probabilities. By Proposition~\ref{prop1} (i) and (iii), $P$ is a complemented GFE. Conversely, assume $P$ to be a complemented Boolean GFE. Then, as mentioned in Remark~\ref{rem2}, the elements of $P$ are varying. Further, if $p\wedge q=0$ then $p\perp q$ and thus $p+q\in P$ according to (4). By Proposition~\ref{prop1} (i) specific sets of varying $S$-probabilities are orthoposets and hence Boolean GFEs are Boolean orthoposets. As mentioned in \cite{T91}, Boolean orthoposets are concrete logics due to a proof by Navara and Pt\'ak about Boolean orthomodular posets which does not make use of orthomodularity (cf.\ \cite{NP}).
\end{proof}

Since any specific set of varying $S$-probabilities is a concrete logic, its elements can be represented by functions which have only the values $0$ or $1$. So these $S$-probabilities must be varying from the outset. If $S$ is finite, Theorem~\ref{th1} leads to the conclusion that the specific sets of varying $S$-probabilities are Boolean algebras, since finite Boolean orthoposets are Boolean algebras. So in order to distinguish between a classical and a quantum mechanical behaviour by measurements in the form of numerical events one would need data from $S$-probabilities for a continuous set $S$ of states.

Next we turn our attention towards the connection of specific sets of varying $S$-prob\-a\-bil\-i\-ties and algebras of $S$-probabilities.

\begin{lemma}\label{lem1}   
The complemented Boolean {\rm GFEs} are exactly the algebras of $S$-probabilities that are Boolean.
\end{lemma}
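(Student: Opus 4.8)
The plan is to prove the two inclusions separately.

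First I would dispose of the easier inclusion: every Boolean algebra of $S$-probabilities is a complemented Boolean GFE. Let $P$ be an algebra of $S$-probabilities that is Boolean. Since condition (5) contains condition (4) as the special case $r=0$ (as noted right after Definition~\ref{def2}), $P$ is in particular a GFE, and it is Boolean by hypothesis, so only complementedness remains. Here I would invoke the fact recalled after Definition~\ref{def3} that the elements of an algebra of $S$-probabilities are varying; since $P$ satisfies (1) and (2), Remark~\ref{rem2} then gives that $P$ is complemented. Thus $P$ is a complemented Boolean GFE.

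The substantial inclusion is the converse: every complemented Boolean GFE $P$ is an algebra of $S$-probabilities. Such a $P$ is Boolean and, being a GFE, already satisfies (1), (2) and (4); the only thing to establish is condition (5). By Theorem~\ref{th1} I may use that $P$ is a specific set of varying $S$-probabilities, hence a concrete logic and an orthoposet (Proposition~\ref{prop1}(i)) in which $p\perp q$ is equivalent to $p\wedge q=0$ (Proposition~\ref{prop1}(ii) together with the Boolean property). The first reduction I would carry out is to observe that, for pairwise orthogonal $p,q,r$, condition (5) follows from condition (4) once we know $p+q+r\le1$: indeed $p+q\in P$ by (4), and $p+q\le1-r=r'$ says exactly $(p+q)\perp r$, so a second application of (4) gives $(p+q)+r=p+q+r\in P$. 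Hence everything comes down to the inequality $p+q+r\le1$ whenever $p\perp q\perp r\perp p$.

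To obtain this inequality I would pass to the concrete-logic representation furnished by Theorem~\ref{th1}. Under the orthoposet isomorphism onto a collection $\Delta$ of subsets of some $\Omega$, pairwise orthogonality of $p,q,r$ becomes pairwise disjointness of the corresponding sets $A,B,C$, so that $A\cup B\in\Delta$, $(A\cup B)\cap C=\emptyset$ and $(A\cup B)\cup C\in\Delta$. The point to transport back to $P$ is that the join $p\vee q$ exists and that the real-function sum $p+q$ realizes it, i.e. $p+q=p\vee q$ for orthogonal $p,q$; granting this, the disjointness $(A\cup B)\cap C=\emptyset$ translates into $(p\vee q)\perp r$, that is $p+q\le r'$, which is the desired $p+q+r\le1$.

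The main obstacle is precisely the identity $p+q=p\vee q$ for orthogonal $p,q$ (equivalently, that pairwise orthogonal elements are jointly orthogonal, $p+q\le r'$). One direction, $p\vee q\le p+q$, is immediate because $p+q$ is an upper bound of $\{p,q\}$; the reverse cannot be read off pointwise, since pairwise orthogonality only forces the pairwise sums to be at most $1$ and still permits $p(s)+q(s)+r(s)>1$ at individual states. I therefore expect the delicate step to be verifying that the closure condition, together with the Boolean (concrete-logic) structure, leaves no room for the element $p+q\in P$ to sit strictly above the join $p\vee q$; by passing to complements this is again the joint-orthogonality statement, since a common upper bound $v\ge p,q$ satisfies $v\ge p+q$ exactly when $v'\perp p$ and $v'\perp q$ imply $v'\perp(p+q)$. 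Establishing this coherence between the additive operation $+$ and the order-theoretic join is the crux on which condition (5), and hence the whole inclusion, rests.
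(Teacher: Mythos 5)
Your second paragraph (the inclusion ``Boolean algebra of $S$-probabilities $\Rightarrow$ complemented Boolean GFE'') is correct and agrees with the paper up to an inessential variation: the paper gets complementedness by citing that every algebra of $S$-probabilities is an orthoposet (\cite{MT}), while you go through varying elements and Remark~\ref{rem2}; both work. The problem is the converse, substantial inclusion. Your reduction is fine as far as it goes: for pairwise orthogonal $p,q,r$, two applications of (4) yield (5) once $p+q+r\leq1$ is known, and via the concrete-logic representation of Theorem~\ref{th1} you correctly obtain that $p\vee q$ exists in $P$ and that $(p\vee q)\perp r$. But the step you yourself flag as the crux --- that the real-function sum $p+q$ coincides with $p\vee q$ for orthogonal $p,q$ --- is never proved, and it cannot be extracted from the representation as you suggest: the isomorphism onto the set system $\Delta$ preserves only the order and the antitone involution, so it identifies $p\vee q$ with $A\cup B$ but carries no information about the pointwise sum $p+q$; only the trivial inequality $p\vee q\leq p+q$ comes for free. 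Hence your argument delivers $p\vee q\leq r'$, not the needed $p+q\leq r'$, and condition (5) remains unestablished. Your closing paragraph is an honest description of the obstacle, not an argument overcoming it, so the proposal is genuinely incomplete.

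This missing step is precisely where the paper invokes an external result: its proof of Lemma~\ref{lem1} uses Theorem~\ref{th1} to conclude that a complemented Boolean GFE is a concrete logic and then cites \cite{D12} for the fact that such a GFE is an algebra of $S$-probabilities; the coherence between $+$ and the join that you isolate is part of what \cite{D12} establishes, and it is real work. Note that you cannot close the gap by appealing to later material: the identity $p+q=p\vee q$ for disjoint elements is exactly what distinguishes $\mathcal C_2$, and the paper derives it only under the additional hypotheses (7) (Lemma~\ref{lem2}) or (1)--(3) together with (8) (Lemma~\ref{lem3}); moreover Theorems~\ref{th2} and \ref{th3} themselves depend on Lemma~\ref{lem1}, so using them here would be circular. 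Nor does a pointwise or elementary order argument seem to succeed --- as you correctly observe, pairwise orthogonality permits $p(s)+q(s)+r(s)>1$ at individual states, and the natural attempts (e.g., showing $(p+q)\wedge r=0$ and then applying Booleanness) run into the same inequality one is trying to prove. To make your route complete you would have to either reproduce the relevant argument of \cite{D12} or cite it, as the paper does.
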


\begin{proof}
According to Theorem~\ref{th1} a complemented Boolean GFE is a concrete logic, and that such a GFE is an algebra of $S$-probabilities was already shown in \cite{D12}. Conversely, every algebra of $S$-probabilities that is Boolean is also a Boolean GFE, and an arbitrary algebra of $S$-probabilities is complemented (because it is an orthoposet, first ascertained in \cite{MT}).  
\end{proof}

In fact, algebras of $S$-probabilities are orthomodular posets with a full set of states, and vice versa (cf.\ \cite{MT}). Further, an orthomodular poset is Boolean if and only if it is infimum faithful (cf.\ \cite G). To be {\em infimum faithful} means that $p\wedge q$ exists if and only if $p$ and $q$ {\em commute}, i.e.\ $p=(p\wedge q)\vee(p\wedge q')$. Since denoting an algebra of $S$-probabilities $P$ as Boolean could be mixed up with $P$ being a Boolean algebra, what in general is not the case, we rather prefer the notion infimum faithful. In the light of Theorem~\ref{th1} and Lemma~\ref{lem1} we then obtain

\begin{theorem}\label{th2}
The specific sets of varying $S$-probabilities are exactly the infimum faithful algebras of $S$-probabilities.
\end{theorem}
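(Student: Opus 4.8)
The plan is to assemble the statement purely by composing the two equivalences already established with two facts about orthomodular posets recalled just before the statement. First I would invoke Theorem~\ref{th1}, which identifies the specific sets of varying $S$-probabilities with the complemented Boolean GFEs, and then Lemma~\ref{lem1}, which identifies the complemented Boolean GFEs with exactly those algebras of $S$-probabilities that are Boolean. Chaining these two equivalences gives that the specific sets of varying $S$-probabilities are precisely the Boolean algebras of $S$-probabilities. Consequently the whole task reduces to replacing, within the class of algebras of $S$-probabilities, the word \emph{Boolean} by \emph{infimum faithful}.

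For this replacement I would rely on the two imported facts. On the one hand, every algebra of $S$-probabilities is an orthomodular poset (cf.\ \cite{MT}); on the other hand, an orthomodular poset is Boolean if and only if it is infimum faithful (cf.\ \cite{G}). Hence, restricted to the class of algebras of $S$-probabilities, being Boolean and being infimum faithful are one and the same property. Substituting this equivalence into the chain of the previous paragraph yields at once that the specific sets of varying $S$-probabilities are exactly the infimum faithful algebras of $S$-probabilities, which is the assertion.

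The only point that requires attention is a matter of terminology rather than of substance: one must check that the word \emph{Boolean} as used here, namely the condition that $p\wedge q=0$ implies $p\perp q$ (the definition of a Boolean poset with antitone involution given in the Introduction), coincides for orthomodular posets with the notion of \emph{Boolean} appearing in \cite{G}. Since the definition in the Introduction is explicitly the standard orthoposet notion extended to posets with an antitone involution, and since algebras of $S$-probabilities are orthomodular posets, this matching is automatic. I therefore expect no genuine obstacle: the analytic content has already been front-loaded into Theorem~\ref{th1}, Lemma~\ref{lem1} and the characterization of Booleanness by infimum faithfulness, so that what remains is a formal composition of equivalences.
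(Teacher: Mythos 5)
Your proposal is correct and is essentially identical to the paper's own (implicit) proof: the paper obtains Theorem~\ref{th2} ``in the light of Theorem~\ref{th1} and Lemma~\ref{lem1}'' by exactly your composition, first identifying the specific sets of varying $S$-probabilities with the Boolean algebras of $S$-probabilities and then invoking the two facts recalled just before the statement, namely that algebras of $S$-probabilities are orthomodular posets and that an orthomodular poset is Boolean if and only if it is infimum faithful. Your closing terminological check that \emph{Boolean} in the sense of the Introduction matches the notion used for orthomodular posets is likewise in accord with the paper's reading, so nothing is missing.
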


Returning to the motivation of the definition of specific sets of $S$-probabilities by Boolean rings, in line with Theorems~\ref{th1} and \ref{th2} we can now remark:

\begin{remark}
An infimum faithful algebra of $S$-probabilities which is a Boolean algebra can be conceived as a Boolean ring if one extends $+$ to arbitrary $S$-probabilities $p$ and $q$ by assuming within the pointwise addition of the functions $p$ and $q$ that $1+1=0$, and taking $p\cdot q:=p\wedge q$ for the ring's multiplication.
\end{remark}

\section{Further classes of specific sets of $S$-probabilities}

Let $P$ be a set of $S$-probabilities. We consider the following conditions:
\begin{enumerate}
\item[(6)] If $p,q\in P$ and $p\wedge q=0$ then $p+q=p\vee q\in P$,
\item[(7)] if $p,q,r\in P$, $p\perp q\perp r$ and $p\wedge r=0$ then $p+q+r\in P$,
\item[(8)] if $p,q,r\in P$, $p\perp q\perp r$ and $p\wedge r=0$ then $p+q+r\leq1$.
\end{enumerate}

Condition (6) can be motivated by regarding a classical field of events as a Boolean ring $\mathbf R$ for which it is the case that $A\cap B=\emptyset$ for $A,B\in R$ implies $A\boldsymbol{+}B=A\cup B$ (see Introduction). For short, we will denote specific sets of $S$-probabilities that satisfy condition (6) as {\em $\vee$-specific} ({\em join-specific}) sets of $S$-probabilities. If (1), (2) and (7) hold, $P$ is called a {\em structured set of $S$-probabilities} (cf.\ \cite{DL16}), and if (1), (2) and (8) are satisfied $P$ is known as a {\em weakly structured set of $S$-probabilities} (cf.\ \cite{DL16}).

Now we define the following classes of sets of $S$-probabilities:

$\mathcal C_1$: class of specific sets of $S$-probabilities, \\
$\mathcal C_2$: class of $\vee$-specific sets of $S$-probabilities (satisfying (6)), \\
$\mathcal C_3$: class of structured sets of $S$-probabilities (for which (7) is distinctive), \\
$\mathcal C_4$: class of weakly structured sets of $S$-probabilities (characterized by (8)).

$\mathcal C_2$ is a subclass of $\mathcal C_1$, and this is also true for $\mathcal C_3$ as one can see by setting $q=0$ within (7).

\begin{lemma}\label{lem2}
We have $\mathcal C_3\subseteq\mathcal C_2\subseteq\mathcal C_4$.
\end{lemma}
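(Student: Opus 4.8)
The plan is to establish the two inclusions $\mathcal C_3\subseteq\mathcal C_2$ and $\mathcal C_2\subseteq\mathcal C_4$ separately, each by verifying the defining condition of the larger class.

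For $\mathcal C_3\subseteq\mathcal C_2$, let $P$ be a structured set, so that (1), (2) and (7) hold. First I would note that $P$ is in fact specific: putting the middle element equal to $0$ in (7) (as already observed in the text) yields (3), whence $P$ is a Boolean poset. Given $p,q\in P$ with $p\wedge q=0$, the Boolean property gives $p\perp q$, so $p\leq q'=1-q$ forces $p+q\leq1$, and $p+q\in P$ by (3); moreover $p+q$ is clearly an upper bound of $\{p,q\}$ since $p,q\geq0$ pointwise. The crux is to show it is the \emph{least} upper bound. So I would take an arbitrary upper bound $s\in P$ of $p$ and $q$ and pass to its complement $t:=s'=1-s\in P$, which exists by (2). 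From $s\geq p$ and $s\geq q$ one reads off $t\leq p'$ and $t\leq q'$, i.e.\ $p\perp t$ and $t\perp q$; together with $p\perp q$ this makes $p,t,q$ pairwise orthogonal. Applying (7) to the ordered triple $(p,t,q)$ — whose consecutive orthogonalities $p\perp t$ and $t\perp q$ hold and whose outer pair satisfies $p\wedge q=0$ — gives $p+t+q\in P$. Being an $S$-probability it satisfies $p+t+q\leq1$, so $p+q\leq1-t=s$. Thus $p+q=p\vee q$, condition (6) holds, and $P\in\mathcal C_2$.

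For $\mathcal C_2\subseteq\mathcal C_4$, let $P$ be $\vee$-specific and take $p,q,r\in P$ with $p\perp q\perp r$ and $p\wedge r=0$. Since $p\wedge r=0$, condition (6) guarantees that $p\vee r$ exists and $p\vee r=p+r\in P$. By De Morgan (Remark~\ref{rem1}) we then have $(p\vee r)'=p'\wedge r'$, and because $q\perp p$ and $q\perp r$ give $q\leq p'$ and $q\leq r'$, the element $q$ is a lower bound of $\{p',r'\}$, so $q\leq p'\wedge r'=(p+r)'=1-(p+r)$. Hence $p+q+r\leq1$, which is exactly condition (8), and therefore $P\in\mathcal C_4$.

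I expect the main obstacle to be the supremum direction in the first inclusion, namely verifying that $p+q$ dominates \emph{every} upper bound rather than merely being one. The idea that makes it work is to convert an upper bound $s$ into a third mutually orthogonal summand by complementation ($t=1-s$) and to feed the triple into (7) in the order $(p,t,q)$, so that the ``outer'' infimum-zero hypothesis of (7) is met precisely by the assumption $p\wedge q=0$. Everything else — the reduction of (7) to (3) via a zero middle element, and the De Morgan computation for the second inclusion — should be routine once the orthogonality relations are correctly read off from the order.
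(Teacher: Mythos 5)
Your proposal is correct and follows essentially the same route as the paper: for $\mathcal C_3\subseteq\mathcal C_2$ the paper likewise complements an arbitrary upper bound $r$ and applies (7) to the triple $(p,r',q)$, whose outer pair satisfies $p\wedge q=0$, to conclude $p+r'+q\leq1$ and hence $p+q\leq r$. Your second inclusion is also the paper's argument in mildly different clothing — the paper deduces $p+r=p\vee r\leq q'$ directly from $p,r\leq q'$, whereas you pass through De Morgan via $q\leq p'\wedge r'=(p\vee r)'$, which is an equivalent one-line rephrasing.
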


\begin{proof}
Let $P$ be a set of $S$-probabilities and $p,q,r\in P$. First assume $P\in\mathcal C_3$. As already mentioned above $P$ is a specific set of $S$-probabilities. If $p\wedge q=0$ then $p+q=p\vee q$ because for $r\geq p,q$ we have $p\perp r'\perp q$ besides $p\wedge q=0$ from which we can conclude that $p+r'+q\in P$ showing that $p+q\leq r$ and hence $p+q=p\vee q$ what explains that $P\in\mathcal C_2$ and hence $\mathcal C_3\subseteq\mathcal C_2$. Now assume $P\in\mathcal C_2$, $p\perp q\perp r$ and $p\wedge r=0$. Since $p\leq q'$ and also $r\leq q'$ we obtain that $p+r=p\vee r\leq q'$ from which we infer $p+q+r\leq1$. Therefore $P\in\mathcal C_4$ and hence $\mathcal C_2\subseteq\mathcal C_4$.
\end{proof}

\begin{lemma}\label{lem3}
We have $\mathcal C_2=\mathcal C_1\cap \mathcal C_4$.
\end{lemma}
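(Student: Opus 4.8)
The plan is to prove the two inclusions separately. The inclusion $\mathcal C_2\subseteq\mathcal C_1\cap\mathcal C_4$ is immediate: it was already noted above Lemma~\ref{lem2} that every $\vee$-specific set is specific, giving $\mathcal C_2\subseteq\mathcal C_1$, while Lemma~\ref{lem2} supplies $\mathcal C_2\subseteq\mathcal C_4$. So the whole content lies in the reverse inclusion.

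For $\mathcal C_1\cap\mathcal C_4\subseteq\mathcal C_2$ I would take an arbitrary $P\in\mathcal C_1\cap\mathcal C_4$ and verify condition~(6) directly. Let $p,q\in P$ with $p\wedge q=0$. Since $P$ is specific, condition~(3) gives $p+q\in P$, so it remains only to identify this sum as the join. Because $p,q\geq0$ we have $p+q\geq p,q$, so $p+q$ is an upper bound of $\{p,q\}$; the task is to show it is the least one.

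The key step—essentially the same device as in the proof of Lemma~\ref{lem2}, but now feeding it into the weaker condition~(8)—is to take any $r\in P$ with $r\geq p,q$ and observe that then $p\perp r'\perp q$ while $p\wedge q=0$. Indeed $p\leq r=(r')'$ yields $p\perp r'$, and $q\leq r$ yields $r'\leq q'$, that is $r'\perp q$. Applying condition~(8) to the triple $p,r',q$ gives $p+r'+q\leq1$, hence $p+q\leq1-r'=r$. Thus $p+q$ is the least upper bound, so $p+q=p\vee q\in P$, establishing (6) and therefore $P\in\mathcal C_2$.

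I do not anticipate a genuine obstacle: the point is merely that the two defining features of membership in $\mathcal C_2$ are furnished by the two intersected classes separately. Specificity ($\mathcal C_1$, via condition~(3)) places $p+q$ in $P$, and weak structuredness ($\mathcal C_4$, via condition~(8)) supplies exactly the inequality $p+r'+q\leq1$ needed to upgrade that sum to a supremum. The only care required is to check the orthogonality relations $p\perp r'$ and $r'\perp q$ from $r\geq p,q$, which follow at once from $'$ being an antitone involution.
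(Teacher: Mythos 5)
Your proposal is correct and follows essentially the same route as the paper: for $P\in\mathcal C_1\cap\mathcal C_4$ and $p\wedge q=0$ with upper bound $r$, form the triple $p\perp r'\perp q$ and apply condition (8) to get $p+r'+q\leq1$, i.e.\ $p+q\leq r$, while the reverse inclusion comes from Lemma~\ref{lem2} together with the remark that $\vee$-specific sets are specific. Your version merely spells out the details the paper leaves implicit (that condition (3) puts $p+q$ in $P$, and the antitone-involution verification of $p\perp r'$ and $r'\perp q$).
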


\begin{proof}
Let $P$ be a specific set of $S$-probabilities which is also a weakly structured set of $S$-probabilities and assume $p,q,r\in P$ such that $p\wedge q=0$ and $r\geq p,q$. Then $p\perp r'\perp q$ and hence $p+r'+q\leq1$, i.e.\ $p+q\leq r$ which shows $p+q=p\vee q$. Since according to Lemma~\ref{lem2} $\mathcal C_2\subseteq\mathcal C_4$ we are done.
\end{proof}

\begin{theorem}
The class of structured sets of $S$-probabilities is a proper subclass of the class of $\vee$-specific sets of $S$-probabilities which on its part is a proper subclass of the class of weakly structured sets of $S$-probabilities unless one assumes that only specific sets of $S$-probabilities are taken into account.     
\end{theorem}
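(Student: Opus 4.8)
The two inclusions are already available: Lemma~\ref{lem2} gives $\mathcal C_3\subseteq\mathcal C_2\subseteq\mathcal C_4$, while Lemma~\ref{lem3} gives $\mathcal C_2=\mathcal C_1\cap\mathcal C_4$. The latter equality is exactly what produces the clause ``unless only specific sets are taken into account'': if $P$ is specific then $P\in\mathcal C_4$ already forces $P\in\mathcal C_1\cap\mathcal C_4=\mathcal C_2$, so inside $\mathcal C_1$ the classes $\mathcal C_2$ and $\mathcal C_4$ coincide and the second inclusion ceases to be proper. What remains is therefore to show that \emph{both} inclusions are proper, the first unconditionally and the second for sets not assumed specific, which I plan to do by exhibiting one separating example each.

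For $\mathcal C_3\subsetneq\mathcal C_2$ I would let $S$ be a single state, so that the $S$-probabilities are simply the numbers of $[0,1]$ with their usual order and $'$ is $x\mapsto1-x$, and take $P=\{0,\tfrac14,\tfrac34,1\}$. Since $P$ is a chain, any two of its nonzero elements have the positive common lower bound $\tfrac14$, so the only disjoint pairs are those containing $0$; hence (3) and (6) hold trivially and $P\in\mathcal C_2$. However $\tfrac14\perp\tfrac14$ (because $\tfrac14\le\tfrac34=(\tfrac14)'$) while $\tfrac14+\tfrac14=\tfrac12\notin P$, so the triple $(p,q,r)=(\tfrac14,\tfrac14,0)$ violates (7) and $P\notin\mathcal C_3$. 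Equivalently, setting $r=0$ in (7) recovers (4), so every structured set is a GFE, and $P$ is visibly not a GFE. As this $P$ is specific, the first inclusion is proper even within $\mathcal C_1$.

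For $\mathcal C_2\subsetneq\mathcal C_4$ the separating set must fail to be specific, for by Lemma~\ref{lem3} a specific member of $\mathcal C_4$ already belongs to $\mathcal C_2$. I would take $S=\{1,2\}$ and $P=\{0,1,u,u',v,v'\}$ with $u=(\tfrac12,0)$, $v=(0,\tfrac12)$ and their complements $u'=(\tfrac12,1)$, $v'=(1,\tfrac12)$. Then $u\wedge v=0$ but $u+v=(\tfrac12,\tfrac12)\notin P$, so (3) fails; hence $P\notin\mathcal C_1$, and since $\mathcal C_2\subseteq\mathcal C_1$ also $P\notin\mathcal C_2$. To confirm $P\in\mathcal C_4$ I would verify (8) by first determining, for each element, its set of lower bounds in $P$; this reveals $\{u,v\}$ as the only disjoint pair of nonzero elements, so the only nontrivial instances of (8) have $\{p,r\}=\{u,v\}$ with $q$ orthogonal to both $u$ and $v$, which forces $q\in\{0,u,v\}$, and in each case $p+q+r\le1$. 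Instances with $0\in\{p,r\}$ collapse to a single orthogonal pair and are immediate.

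The arithmetic here is routine; the one genuinely delicate ingredient is the bookkeeping of $\wedge$. Because $p\wedge q=0$ is defined \emph{relative to all of $P$} and is not the pointwise minimum of the two functions, the proof that $P\in\mathcal C_4$ in the second example rests entirely on correctly listing every disjoint pair: a single overlooked small element of $P$ sitting below both $p$ and $q$ would break a disjointness claim and thereby enlarge the collection of triples that (8) is required to control. I expect the main obstacle to be precisely this computation of the lower-bound sets and the resulting enumeration of the triples relevant to (8), rather than any conceptual step.
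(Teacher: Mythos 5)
Your proof is correct, and it follows the same skeleton as the paper's: cite Lemma~\ref{lem2} for the inclusions, read the ``unless'' clause off Lemma~\ref{lem3} (i.e.\ off $\mathcal C_2=\mathcal C_1\cap\mathcal C_4$), and separate the classes by two explicit examples. The differences are in the examples themselves. For $\mathcal C_2\subsetneq\mathcal C_4$ your set $\{(0,0),(1/2,0),(0,1/2),(1/2,1),(1,1/2),(1,1)\}$ is literally the paper's second example; the paper merely asserts $P\in\mathcal C_4$, whereas you carry out the verification of (8), correctly identifying $\{u,v\}$ as the only disjoint pair of nonzero elements and $q\in\{0,u,v\}$ as the only middle terms (note $u\perp u$ and $v\perp v$ here, which you handle), so your write-up is actually more complete on this point. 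For $\mathcal C_3\subsetneq\mathcal C_2$ your example is genuinely different and simpler than the paper's: a four-element chain $\{0,1/4,3/4,1\}$ over a single state, where disjointness degenerates (every nonzero pair has the common lower bound $1/4$), so (3) and (6) hold vacuously, while $1/4\perp 1/4$ with $1/4+1/4=1/2\notin P$ kills (7) via the triple $(1/4,1/4,0)$ --- equivalently, $P$ fails to be a GFE. The paper instead uses a seven-element set over two states in which $(0,1/2)\perp(1/2,1/2)\perp(1/2,1/4)$ with disjoint outer terms gives a sum $(1,5/4)$ that even exceeds $1$; your chain trick buys economy by exploiting that disjointness, unlike orthogonality, is computed relative to $P$ and can be made trivial, whereas the paper's example keeps the outer terms genuinely disjoint and incomparable. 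Both examples are specific, so both establish, as you note, that the first inclusion is proper even within $\mathcal C_1$.
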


\begin{proof}
As for the inclusions to be proper, in agreement with Lemma~\ref{lem2} it suffices  to consider the following two examples: \\
First, assume $|S|=2$ and define
\[
P:=\{(0,0),(0,1/2),(1/2,1/4),(1/2,1/2),(1/2,3/4),(1,1/2),(1,1)\}.
\]
Then $P\in\mathcal C_2$, but $P\notin\mathcal C_3$ since
\[
(0,1/2)\perp(1/2,1/2)\perp(1/2,1/4)\text{ and }(0,1/2)\wedge(1/2,1/4)=(0,0),
\]
but
\[
(0,1/2)+(1/2,1/2)+(1/2,1/4)=(1,5/4)\notin P.
\]
Second example: Again we assume $|S|=2$ and this time define
\[
P:=\{(0,0),(0,1/2),(1/2,0),(1/2,1),(1,1/2),(1,1)\}.
\]
Then $P\in\mathcal C_4$, but $P\notin\mathcal C_2$ since
\[
(0,1/2)\wedge(1/2,0)=(0,0),\text{ but }(0,1/2)+(1/2,0)=(1/2,1/2)\notin P.
\]
That the $\vee$-specific sets of $S$-probabilities are exactly the elements of $\mathcal C_1\cap \mathcal C_4$ is confirmed by Lemma~\ref{lem3}.  
\end{proof}

Next we will discuss the question how far $\vee$-specific sets of $S$-probabilities are away from being Boolean algebras. A first reference to this will be the subclass $\mathcal C_3$ of $\mathcal C_2$.

\begin{theorem}\label{th3}
The members of the class $\mathcal C_3$ of structured sets of $S$-probabilities are exactly the infimum faithful algebras of $S$-probabilities.
\end{theorem}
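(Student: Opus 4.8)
The plan is to reduce everything to Theorem~\ref{th2}. Since that result already identifies the infimum faithful algebras of $S$-probabilities with the specific sets of varying $S$-probabilities, it suffices to prove that $\mathcal C_3$ coincides with the class of specific sets of varying $S$-probabilities; the equivalence with infimum faithful algebras is then immediate.

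For the inclusion of the specific varying sets into $\mathcal C_3$ I would argue as follows. Let $P$ be a specific set of varying $S$-probabilities. By Theorem~\ref{th2} it is an algebra, hence satisfies (5). To verify (7), take $p,q,r\in P$ with $p\perp q\perp r$ and $p\wedge r=0$. Because $P$ is specific it is a Boolean poset, so $p\wedge r=0$ forces $p\perp r$. Thus $p,q,r$ are pairwise orthogonal, i.e.\ $p\perp q\perp r\perp p$, and (5) yields $p+q+r\in P$. Together with (1) and (2) this shows $P\in\mathcal C_3$. This direction is routine.

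The converse inclusion is the substantial one. Given $P\in\mathcal C_3$, I would first record that $P$ is specific: setting $q=0$ in (7) recovers condition (3), and setting $r=0$ gives $p\perp q\Rightarrow p+q\in P$, so $P$ is in fact a GFE. By Theorem~\ref{th2} it then remains only to show that every element of $P$ is varying, and by Remark~\ref{rem2} this is equivalent to $P$ being complemented, i.e.\ $p\wedge p'=0$ for all $p\in P$. Once complementedness is established, $P$ is a specific set of varying $S$-probabilities and hence an infimum faithful algebra.

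Proving complementedness is where I expect the real difficulty to lie, and it is the step I would budget the most effort for. The strategy I would pursue is by contradiction: suppose some $p\in P$ has $p\wedge p'\neq 0$, so that there is $0\neq x\in P$ with $x\le p$ and $x\le p'$, whence $x\le\tfrac12$ pointwise and $x\perp p$, $x\perp p'$. The aim is to manufacture from $x$ and the elements it generates under the sum (such as $x+p$, $x+p'$ and their complements, all of which lie in $P$ since $P$ is a GFE) a legitimate triple $a\perp b\perp c$ with $a\wedge c=0$ for which $a+b+c>1$ at some state; since every element of $P$ takes values in $[0,1]$, condition (7) would then be violated, giving the contradiction. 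This mirrors the mechanism in the examples separating $\mathcal C_3$ from $\mathcal C_2$, where an admissible triple produced a sum such as $(1,5/4)$ lying outside $P$. The delicate point is securing the outer disjointness $a\wedge c=0$ while retaining $a\perp b\perp c$ and forcing the sum above $1$, since the multiples and translates of a single sub-$\tfrac12$ element tend to remain comparable to $x$ rather than disjoint, so some care is needed in choosing the third element.
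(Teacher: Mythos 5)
Your overall skeleton matches the paper's: reduce to Theorem~\ref{th2}, and handle the easy inclusion by noting that a specific set is Boolean, so $p\wedge r=0$ upgrades to $p\perp r$ and condition (5) applies -- this is precisely what the paper dismisses with ``the converse is obvious'', and your derivation of (3) and (4) from (7) by setting $q=0$ resp.\ $r=0$ is also the paper's. The difficulty is that you never actually prove the one substantive step, namely that every element of a member of $\mathcal C_3$ is varying (equivalently, that $P$ is complemented): you only announce a strategy and concede that ``some care is needed in choosing the third element''. The paper closes this step by an entirely different mechanism: by Lemma~\ref{lem2} one has $\mathcal C_3\subseteq\mathcal C_2$, so sums of disjoint elements are joins; it then invokes the result of \cite{D12} that a GFE with this property is an algebra of $S$-probabilities, and the fact from \cite{DDL} that elements of algebras of $S$-probabilities are varying, after which Theorem~\ref{th2} finishes. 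None of this external input has a substitute in your proposal, so the proof is incomplete exactly at its only nontrivial point.

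Moreover, the contradiction you sketch is unattainable in principle. You seek, inside a $P\in\mathcal C_3$ with a non-complementary element, a triple $a\perp b\perp c$ with $a\wedge c=0$ and $a+b+c>1$ at some state. But the proof of Lemma~\ref{lem2}, which uses only (7) and its consequences, shows that every member of $\mathcal C_3$ satisfies (6) and hence (8): for any triple meeting the hypotheses of (7) one has $a,c\leq b'$ and, by (6), $a+c=a\vee c\leq b'$, so $a+b+c\leq1$ pointwise. Thus in any set actually satisfying (7) no admissible triple ever exceeds $1$, and the mechanism of the paper's first separating example cannot be reproduced under the hypothesis $P\in\mathcal C_3$ (that example exhibits a triple violating (8), which only certifies that the displayed set fails (7); note it is in fact at odds with Lemma~\ref{lem2}, since by that lemma such a triple would already exclude the set from $\mathcal C_2$). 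Worse for your plan, complementedness simply does not follow from (1), (2) and (7) by manipulating function values: the three constant functions $0$, $1/2$, $1$ satisfy (1), (2) and (7) -- every triple meeting the hypotheses of (7) there has sum $0$, $1/2$ or $1$ -- yet $(1/2)\wedge(1/2)'=1/2\neq0$ and $1/2$ is not varying. So the hard step genuinely requires the imported structure theory of \cite{D12} and \cite{DDL} with its precise hypotheses, not an internal triple-construction argument of the kind you propose; as written, your proof has a gap that cannot be repaired along the route you indicate.
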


\begin{proof}
According to Lemma~\ref{lem2} the members of $\mathcal C_3$ are GFEs with the property that $p\wedge q=0$ implies $p+q=p\vee q$, from which one can conclude (cf.\ \cite{D12}) that these posets are algebras of $S$-probabilities. Since the elements of an algebra of $S$-probabilities are varying (cf.\ e.g.\ \cite{DDL}), due to Theorem~\ref{th2} the members of $\mathcal C_3$ are infimum faithful algebras of $S$-probabilities. The converse is obvious.
\end{proof}

Though $\mathcal C_3$ is a proper subclass of $\mathcal C_2$ more incisive properties have to be taken into account to distinguish $\mathcal C_2$ from Boolean algebras: E.g., if a structured set of $S$-probabilities $P$ is finite, it is a Boolean algebra, because, as already mentioned, finite Boolean orthoposets are Boolean algebras. Further, $P$ is a Boolean algebra if it is orthocomplete (cf.\ \cite{T94}). (To be orthocomplete means that the supremum of any set of pairwise orthogonal elements of $P$ has to belong to $P$.) Moreover, if $P$ is a lattice (i.e.\ $p\vee q$ and $p\wedge q$ exist for all $p,q\in P$), then we also have a Boolean algebra (cf.\ \cite{T91}). -- That $P$ is lattice-ordered can be characterized by a simple criterion: According to Theorem~\ref{th1} $P$ is a concrete logic, and as shown in \cite{DL16}, a structured set of $S$-probabilities $P$ which is a concrete logic is a lattice if and only if for all $p,q\in P$ $\max(p,q)\in P$ (the maximum of the functions considered pointwise).

There are many papers in which (arbitrary) classes of algebras of $S$-probabilities are characterized to be Boolean algebras by specifying some structural properties -- for an overview of these papers see \cite D -- and there are numerous results on Boolean orthoposets and concrete logics which can all be applied to fathom the distance between specific sets of $S$-probabilities and Boolean algebras (cf.\ i.a.\ \cite K, \cite{NP}, \cite P, \cite{T91}, \cite{T94} and \cite{T97}).

Sometimes it is not of interest if a whole Boolean structured set of $S$-probabilities $P$ is a Boolean algebra but if a (usually small) subset of $P$ belongs to a Boolean subalgebra of $P$. If this were the case this would indicate that one locally deals with a classical physical system. To answer this question the existence of some further $S$-probabilities in $P$ will have to be asked for, but the knowledge of $P$ in detail will not be important. 

So let us assume that a subset $\{p_1,\ldots,p_n\}$ of a known or hypothetically assumed structured set of $S$-probabilities $P$ is given.
If $p_1,\ldots,p_n$ are pairwise orthogonal, then there does exist a Boolean subalgebra of $P$ wherein $p_1,\ldots,p_n$ are contained, as it is well known for every subset of mutually orthogonal elements of an orthomodular poset (cf.\ \cite{DLM}), and as proved in \cite{MT} every algebra of $S$-probabilities is orthomodular, and by Theorem~\ref{th3} also structured sets of $S$-probabilities have this property. So let us suppose that $\{p_1,\ldots,p_n\}$ is an arbitrary subset of $P$.

Having in mind that the elements of $P$ can only assume the values $0$ and $1$ (cf.\ Theorems~\ref{th1}, \ref{th2} and \ref{th3}) and defining $p\cdot q$ for $p,q\in P$ by $(p\cdot q)(s)=p(s)\cdot q(s)$ for $s\in S$ one obtains that if $p\cdot q$ exists in $P$ then $p\cdot q=p\wedge q$. It is obvious then that $p^k=p$ for $k=1,2,3,\ldots$ and that the multiplication is associative.
 
In Section~2 we have defined what it means that $p$ and $q$ commute. We will express this fact by writing $p\mathrel{\C}q$ and point out that for orthomodular posets $p\mathrel{\C}q$ is equivalent to $q\mathrel{\C}p$. Further we agree to write $\bigwedge B$ for the infimum of the elements of a finite subset $B$ of $P$. Now we can prove the following

\begin{theorem}
The set $\{p_1,\ldots,p_n\}$ is contained in a Boolean subalgebra of a structured set of $S$-probabilities $P$ if and only if $p_{i_1}\cdot\ldots\cdot p_{i_n}\in P$ for all $i_1,\ldots,i_n\in\{1,\ldots,n\}$.
\end{theorem}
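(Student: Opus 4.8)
I would first dispose of the easy implication. Suppose $\{p_1,\dots,p_n\}$ is contained in a Boolean subalgebra $B$ of $P$. By Theorems~\ref{th1}--\ref{th3} every element of $P$ takes only the values $0$ and $1$, so I may read the elements of $B$ as sets. Since $B$ is Boolean, distinct atoms of the finite subalgebra generated by $p_{i_1},\dots,p_{i_n}$ are orthogonal, hence disjoint as sets, and each generator is the disjoint union of the atoms below it; consequently the meet in $B$ coincides with the pointwise product. As the product is associative with $p^2=p$, the expression $p_{i_1}\cdots p_{i_n}$ equals the meet of the distinct factors, which lies in $B\subseteq P$.

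For the converse I would pass to the set picture throughout. By Theorem~\ref{th1}, $P$ is a concrete logic, so I identify each $p\in P$ with $\{s\in S:p(s)=1\}$ and write $A_i:=\{s\in S:p_i(s)=1\}$; then $'$ is set complementation, the order is inclusion, and an infimum, where it exists, is the set intersection. Using associativity and idempotency, the hypothesis says exactly that $\bigcap_{j\in J}A_j\in P$ for every nonempty $J\subseteq\{1,\dots,n\}$. The plan is to show that the Boolean algebra of subsets of $S$ generated by $A_1,\dots,A_n$ is contained in $P$; since its atoms are the nonempty sets $A_\varepsilon:=\bigcap_{\varepsilon_i=1}A_i\cap\bigcap_{\varepsilon_i=0}A_i^c$ and every element is a union of atoms, this collection is closed under the operations of $P$ and contains each $p_i$, so it is the required Boolean subalgebra.

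Two closure facts will drive the argument. The first is a \emph{difference property}: if $X,Y\in P$ with $X\subseteq Y$, then $Y\setminus X\in P$; indeed $X\subseteq Y$ gives $X\perp Y'$, so condition~(4) yields $X+Y'\in P$, which is the disjoint union $X\cup Y^c$, and complementing gives $Y\setminus X=Y\cap X^c\in P$. The second is \emph{disjoint-union closure}: finitely many pairwise disjoint members of $P$ have their union in $P$, by iterating condition~(4), each partial union again being disjoint from the next set.

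The heart of the proof, and the step I expect to be the main obstacle, is that the hypothesis provides only the uncomplemented intersections $\bigcap_{j\in J}A_j$, while the atoms carry complemented factors. I would bridge this gap by induction on the number of complemented factors, proving that every set $\bigcap_{i\in I^+}A_i\cap\bigcap_{i\in I^-}A_i^c$ (with $I^+,I^-$ disjoint) lies in $P$. The base case $I^-=\emptyset$ is the hypothesis. For the inductive step I remove one index $k$ from $I^-$ and set $X:=\bigcap_{i\in I^+}A_i\cap\bigcap_{i\in I^-\setminus\{k\}}A_i^c$ and $Y:=X\cap A_k$; both have one fewer complemented factor, so they lie in $P$ by induction, $Y\subseteq X$, and $X\setminus Y$ is precisely the target set, which lies in $P$ by the difference property. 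Once all atoms are in $P$, disjoint-union closure places every union of atoms---that is, every element of the generated Boolean algebra---into $P$, finishing the construction.
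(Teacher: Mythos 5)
Your argument is correct, but it takes a genuinely different route from the paper's. The paper deduces the theorem from two cited results on orthomodular posets (structured sets being orthomodular via Theorem~\ref{th3}): for $n=2$, Theorem~3.4 of \cite{DLM} gives that $\{p_1,p_2\}$ lies in a Boolean subalgebra iff $\min(p_1,p_2)=p_1\cdot p_2\in P$, which is in turn equivalent to $p_1\mathrel{\C}p_2$; for general $n$, Corollary~2.3 of \cite{DL14} reduces containment of $A=\{p_1,\ldots,p_n\}$ in a Boolean subalgebra to the commuting of $\bigwedge B$ with $\bigwedge D$ for all proper subsets $B,D$ of $A$, and the hypothesis $p_{i_1}\cdot\ldots\cdot p_{i_n}\in P$ supplies exactly these commutativities since $p_B\cdot p_D\in P$. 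You bypass both citations and instead build the generated Boolean subalgebra by hand inside the concrete representation: your difference property ($X\subseteq Y$ implies $Y\setminus X\in P$, via $X\perp Y'$ and condition~(4)), the induction on the number of complemented factors to obtain all atoms, and closure under disjoint unions together show that the whole field of sets generated by $A_1,\ldots,A_n$ sits inside $P$. Each step checks out: condition~(4) is available because setting $r=0$ in (7) makes $P$ a GFE, disjoint sets are orthogonal in the concrete picture, and disjoint unions in $P$ are suprema, so your generated field is indeed a Boolean subalgebra in the subOMP sense. What your approach buys is self-containedness --- it needs neither orthomodularity nor the commutativity calculus of \cite{DLM} and \cite{DL14}, and it exhibits the Boolean subalgebra concretely as the field generated by the $A_i$; what it costs is heavier reliance on reading elements of $P$ as $\{0,1\}$-valued sets on $S$, though the paper itself asserts exactly this before the theorem (and the statement's pointwise products, with $p^k=p$, already presuppose it), so you inherit rather than introduce that assumption. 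Two trivial points to tidy: in the base case of your induction the empty intersection ($I^+=I^-=\emptyset$) is $S=1\in P$ by condition~(1) rather than by the hypothesis, and in the easy direction your claim that meets in $B$ are pointwise products uses that orthogonal joins in the subalgebra $B$ agree with those in $P$, which is part of the standard (subOMP) meaning of Boolean subalgebra intended by the paper.
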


\begin{proof}
Assume $n=2$. Then according to Theorem~3.4 in \cite{DLM} $\{p_1,p_2\}$ is contained in a Boolean subalgebra of $P$ if and only if $p_1\barwedge p_2(:=\min(p_1,p_2))\in P$ which in our notion means that $p_1\cdot p_2\in P$. In this theorem it is also stated that $p_1\barwedge p_2(=p_1\cdot p_2)\in P$ is equivalent to $p_1\mathrel{\C}p_2$. \\
Next we make use of Corollary~2.3 in \cite{DL14} which says: Let $A$ be a subset of an orthomodular poset $P$ with $n>1$ elements. Then $A$ is contained in an Boolean subalgebra of $P$ if and only if $(\bigwedge B)\mathrel{\C}(\bigwedge D)$ for every $k\in\{1,\ldots,n-1\}$ and every $k$-element subsets $B$ and $D$ of $A$. \\
Now we assume $P$ to be our structured set of $S$-probabilities and $A=\{p_1,\ldots,p_n\}\subseteq P$. Then $\bigwedge B$ and $\bigwedge D$ are the products $p_B$ and $p_D$ of the elements of $B$ and $D$, respectively. If $p_{i_1}\cdot\ldots\cdot p_{i_n}\in P$ for all $i_1,\ldots,i_n\in\{1,\ldots,n\}$ then $p_B\mathrel{\C}p_D$ for every subset $B$ and $D$ of $A$ with $k\leq n-1$ elements since $p_B\cdot p_D$ is an element of $P$ and, as mentioned above, $p_B\mathrel{\C}p_D$ is equivalent to $p_B\cdot p_D\in P$. Thus we can conclude that the elements of $A$ are contained in a Boolean subalgebra of $P$. -- The converse is obvious.
\end{proof}   

Besides the possibility to describe sets of $S$-probabilities by structural properties one can also try to characterize them by states, as was done by M.~J.~M\c aczy\'nski and T.~Traczyk, who characterized algebras of $S$-probabilities as the posets which have a full set of states (cf.\ \cite{MT}).
 
\section{Algebraic representations of specific sets of \\
$S$-probabilities} 

We begin by extending the commonly known notion of a state to the class of bounded posets $P$ with an antitone involution. 

\begin{definition}
A {\em specific state} on a bounded poset $\mathbf P=(P,\leq,{}',0,1)$ with an antitone involution is a mapping $s$ from $P$ to $[0,1]$ satisfying the following conditions for all $p,q\in P$:
\begin{enumerate}[{\rm(S1)}]
\item $s(0)=0$ and $s(1)=1$,
\item $s(p')=1-s(p)$,
\item if $p\leq q$ then $s(p)\leq s(q)$,
\item if $p\wedge q=0$ then there exists some $r\in P$ with $r\geq p,q$ and $s(r)=s(p)+s(q)$. 
\end{enumerate}

If for $p,q\in P$ with $p\wedge q=0$ the element $p\vee q$ exists in $P$ then a specific state on $\mathbf P$ satisfying
\begin{enumerate}
\item[{\rm(S5)}] if $p,q\in P$ and $p\wedge q=0$ then $s(p\vee q)=s(p)+s(q)$,
\end{enumerate}
is called a {\em pseudostate} on $\mathbf P$ {\rm(}cf.\ {\rm\cite{DL16})}. \\
A {\em set} $T$ {\em of specific states} on $\mathbf P$ is called {\em full} if for $p,q\in P$, $s(p)\leq s(q)$ for all $s\in T$ implies $p\leq q$, and a {\em set} $T$ {\em of specific states} on $\mathbf P$ is called {\em uniform} if for disjoint $p,q\in P$ condition {\rm(S4)} is satisfied for all $s\in T$ with the very same $r$. {\rm(}With pseudostates one can take $r=p\vee q$.{\rm)}
\end{definition}

\begin{theorem}\label{th5}
Up to isomorphism, the specific sets of $S$-probabilities are exactly the bounded posets with an antitone involution having a full and uniform set of specific states.
\end{theorem}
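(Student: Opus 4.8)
The plan is to read the word ``exactly'' as an up-to-isomorphism biconditional and to prove each direction by the natural dual construction that trades numerical events for states and vice versa.

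For the forward direction I would start with a specific set $P$ of $S$-probabilities and exhibit a full, uniform set of specific states on the bounded poset $(P,\leq,{}',0,1)$ with $p'=1-p$. The states to use are the point evaluations: for each $s\in S$ put $\hat s\colon P\to[0,1]$, $\hat s(p)=p(s)$. Conditions (S1)--(S3) are immediate from the meaning of $0$, $1$, $'$ and the pointwise order. For (S4) I would use condition (3) of Definition~\ref{def1}: if $p\wedge q=0$ then $p+q\in P$, and since $S$-probabilities are nonnegative we have $p+q\geq p,q$, while $\hat s(p+q)=p(s)+q(s)=\hat s(p)+\hat s(q)$; thus a single $r:=p+q$ serves every $\hat s$ at once, which is exactly uniformity. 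Fullness is equally direct, since $\hat s(p)\leq\hat s(q)$ for all $s$ means $p(s)\leq q(s)$ pointwise, i.e.\ $p\leq q$.

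For the converse, let $\mathbf P=(P,\leq,{}',0,1)$ carry a full and uniform set $T$ of specific states, and set $S:=T$. I would define $\tilde p\colon T\to[0,1]$ by $\tilde p(s)=s(p)$, note that $\tilde p$ is a $T$-probability because $s(p)\in[0,1]$, and let $\tilde P:=\{\tilde p:p\in P\}$. Using fullness together with (S3), the map $p\mapsto\tilde p$ is injective and both order-preserving and order-reflecting, hence an order isomorphism of $\mathbf P$ onto $\tilde P$; by (S2) it respects the involutions, since $\widetilde{p'}(s)=1-s(p)=(\tilde p)'(s)$, so in particular $(\tilde p)'=\widetilde{p'}\in\tilde P$. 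Conditions (1) and (2) for $\tilde P$ then follow from (S1) and (S2).

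The heart of the argument, and the step I expect to be the main obstacle, is condition (3) for $\tilde P$. Suppose $\tilde p\wedge\tilde q=0$ in $\tilde P$; because the correspondence is an order isomorphism, disjointness transfers back, so $p\wedge q=0$ in $P$. Here I would invoke uniformity to obtain a \emph{single} $r\in P$ with $r\geq p,q$ and $s(r)=s(p)+s(q)$ for all $s\in T$, whence $\tilde r=\tilde p+\tilde q$ and therefore $\tilde p+\tilde q=\tilde r\in\tilde P$. This is the only place where uniformity, rather than the mere existence of specific states, is needed: without a common witness $r$ the pointwise sum $\tilde p+\tilde q$ need not be represented by any element of $P$, and condition (3) could fail. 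The remaining verifications are routine.
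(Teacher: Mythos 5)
Your proposal is correct and takes essentially the same approach as the paper: point evaluations $\hat s(p)=p(s)$ with the common witness $r=p+q$ giving fullness and uniformity in the forward direction, and the evaluation map $p\mapsto(s\mapsto s(p))$ in the converse, where fullness with (S3) yields an order isomorphism (so disjointness transfers back) and uniformity supplies the single $r$ needed to verify condition (3) for the image. You also correctly identify uniformity as the crux, which matches the paper's explicit remark that the element $u$ is independent of $s$.
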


\begin{proof}
Let $\mathbf P=(P,\leq,{}',0,1)\in\mathcal C_1$ with $P\subseteq[0,1]^S$, $a\in S$ and $p,q\in P$. Then clearly $P$ is a bounded poset with an antitone involution. We define $s_x(r):=r(x)$ for all $x\in S$ and $r\in P$. Then we have
\begin{enumerate}[(S1)]
\item $s_a(0)=0(a)=0$ and $s_a(1)=1(a)=1$,
\item $s_a(p')=p'(a)=1-p(a)=1-s_a(p)$,
\item if $p\leq q$ then $s_a(p)=p(a)\leq q(a)=s_a(q)$,
\item if $p\wedge q=0$ then $p+q\in P$, $p+q\geq p,q$ and $s_a(p+q)=(p+q)(a)=p(a)+q(a)=s_a(p)+s_a(q)$.
\end{enumerate}
Further, if $s_x(p)\leq s_x(q)$ for all $x\in S$ then $p\leq q$. Hence $\{s_x\mid x\in S\}$ is a full and uniform set of specific states on $\mathbf P$. \\ 
Conversely, let $\mathbf P=(P,\leq,{}',0,1)$ be a bounded poset with an antitone involution which has a full and uniform set $S$ of specific states and let $p,q,r\in P$ and $s\in S$. We define $(f(u))(t):=t(u)$ for all $u\in P$ and $t\in S$. Then the following assertions are equivalent: $f(p)\leq f(q)$, $(f(p))(t)\leq(f(q))(t)$ for all $t\in S$, $t(p)\leq t(q)$ for all $t\in S$, $p\leq q$. Therefore $f(p)=f(q)$ if and only if $p=q$. Next we will prove $f(P)\in\mathcal C_1$:
\begin{enumerate}[(1)]
\item $(f(0))(s)=s(0)=0$ and $(f(1))(s)=s(1)=1$; thus $0=f(0)\in f(P)$ and $1=f(1)\in f(P)$.
\item $(f(p'))(s)=s(p')=1-s(p)=1-(f(p))(s)=(f(p))'(s)$; therefore $(f(p))'=f(p')\in f(P)$.
\item Assume $f(p)\wedge f(q)=f(0)$. If $r\leq p,q$ then $f(r)\leq f(p),f(q)$, from which we infer $f(r)=f(0)$, i.e.\ $r=0$, showing $p\wedge q=0$. Accordingly, there exists some $u\in P$ (which is independent of $s$) with $s(u)=s(p)+s(q)$. Now
\[
(f(p)+f(q))(s)=(f(p))(s)+(f(q))(s)=s(p)+s(q)=s(u)=(f(u))(s),
\]
i.e.\ $f(p)+f(q)=f(u)\in f(P)$.
\end{enumerate}
From this we can conclude that $f(\mathbf P)=(f(P),\leq,{}',0,1)\in\mathcal C_1$ and that $f$ is an isomorphism from $\mathbf P$ onto $f(\mathbf P)$. Hence $\mathbf P$ is isomorphic to a member of $\mathcal C_1$.
\end{proof}

As shown in \cite{DL16} up to isomorphism the weakly structured sets of $S$-probabilities are exactly the bounded posets with an antitone involution in which the join of two disjoint elements exists and which have a full set of pseudostates, which in the light of Lemma~\ref{lem3} then reads

\begin{theorem}\label{th4}
Up to isomorphism, the $\vee$-specific sets of $S$-probabilities are exactly the bounded posets with an antitone involution in which the sum of two disjoint elements equals their join and which have a full set of pseudostates.
\end {theorem}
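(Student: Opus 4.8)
My plan is to establish both inclusions by reusing the representation from the proof of Theorem~\ref{th5}, the only genuinely new input being the effect of condition~(6); the shorter route, which I will note at the close, is to feed the \cite{DL16} characterization of weakly structured sets into Lemma~\ref{lem3}, since $\mathcal C_2=\mathcal C_1\cap\mathcal C_4$. First, though, I must fix the reading of the abstract side: a bare poset carries no addition, so ``the sum of two disjoint elements equals their join'' has to be interpreted through the pseudostates, namely as the assertion that for disjoint $p,q$ the join $p\vee q$ exists and $t(p\vee q)=t(p)+t(q)$ for every pseudostate $t$, i.e.\ that the function $t\mapsto t(p\vee q)$ is the pointwise sum of $t\mapsto t(p)$ and $t\mapsto t(q)$.

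For the forward direction I would take a $\vee$-specific $P\subseteq[0,1]^S$ and recycle the evaluation maps $s_x(r):=r(x)$. As in Theorem~\ref{th5} these are specific states and $\{s_x\mid x\in S\}$ is full because the order on $P$ is pointwise. Condition~(6) supplies the extra structure: for disjoint $p,q$ it gives $p+q=p\vee q\in P$, whence $p\vee q$ exists and $s_x(p\vee q)=(p+q)(x)=s_x(p)+s_x(q)$, which is exactly (S5). Thus every $s_x$ is a pseudostate, the family is full, and condition~(6) read back through this representation is precisely the statement that the sum of two disjoint elements equals their join.

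For the converse I would begin with a bounded poset $\mathbf P=(P,\le,{}',0,1)$ with antitone involution carrying a full set $T$ of pseudostates in which the sum of two disjoint elements equals their join, and embed it by $f(u)(t):=t(u)$ as in Theorem~\ref{th5}. Fullness makes $f$ order-reflecting, hence an order isomorphism onto its image, and conditions~(1) and~(2) for $f(P)$ follow verbatim from the computation there. For~(3) and~(6) I would take disjoint $f(p),f(q)$, transport disjointness back to $p,q$ through the isomorphism, use the hypothesis to obtain $p\vee q$ with $t(p\vee q)=t(p)+t(q)$ for all $t\in T$, and read off $f(p)+f(q)=f(p\vee q)\in f(P)$ together with $f(p)+f(q)=f(p)\vee f(q)$. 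Hence $f(\mathbf P)\in\mathcal C_2$ and $\mathbf P$ is isomorphic to a $\vee$-specific set.

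The step I expect to be the real hinge is the equivalence, under a full set of pseudostates, between specificity (condition~(3)) and the existence, for every disjoint pair, of a join that coincides with the pointwise sum: in one direction (S5) converts an already existing join into that sum, so the sum is automatically realized in the set, while in the other direction condition~(6) both produces the join and identifies it with the sum. Isolating this equivalence is exactly what refines the weakly-structured representation of \cite{DL16} into the $\vee$-specific one, and once it is in place the theorem is just that refinement combined with $\mathcal C_2=\mathcal C_1\cap\mathcal C_4$ from Lemma~\ref{lem3}.
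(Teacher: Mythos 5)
Your proposal is correct, but it takes a genuinely different route from the paper. The paper gives no standalone proof of this theorem at all: it simply quotes the representation theorem from \cite{DL16} (up to isomorphism, the weakly structured sets of $S$-probabilities are the bounded posets with an antitone involution in which joins of disjoint elements exist and which have a full set of pseudostates) and observes that, ``in the light of Lemma~\ref{lem3}'' ($\mathcal C_2=\mathcal C_1\cap\mathcal C_4$), this rephrases into the stated theorem --- exactly the ``shorter route'' you flag at the beginning and end of your write-up. Your main argument, by contrast, is a self-contained direct proof obtained by adapting the two halves of the proof of Theorem~\ref{th5}: the evaluation maps $s_x(r):=r(x)$ upgraded from specific states to pseudostates via condition~(6) (which hands you (S5) through $s_x(p\vee q)=(p+q)(x)$), and the canonical embedding $f(u)(t):=t(u)$ in the converse, where fullness gives order-reflection and the hypothesis yields $f(p)+f(q)=f(p\vee q)=f(p)\vee f(q)\in f(P)$, so that $f(P)$ satisfies both (3) and (6) and hence lies in $\mathcal C_2$. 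Both halves check out. What your route buys is twofold: independence from the unproved-here \cite{DL16} citation, and --- more substantively --- it forces you to pin down what ``the sum of two disjoint elements equals their join'' can mean in an abstract poset, which carries no addition; your reading (the join of disjoint $p,q$ exists and $t(p\vee q)=t(p)+t(q)$ for every pseudostate $t$ in the full set, i.e.\ the sum is realized pointwise on the state space) is the only sensible one and is exactly what the canonical representation delivers, whereas the paper leaves this phrase informal. What the paper's route buys is brevity, at the price of resting on the external theorem and on the loose wording. Your closing observation that the whole theorem is the \cite{DL16} representation refined by Lemma~\ref{lem3} matches the paper's actual (one-sentence) derivation precisely.
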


Theorems~\ref{th5} and \ref{th4} are analogues to the theorem mentioned above that up to isomorphism the algebras of $S$-probabilities are exactly the orthomodular posets having a full set of states.

Authors' addresses:

Dietmar Dorninger \\
TU Wien \\
Faculty of Mathematics and Geoinformation \\
Institute of Discrete Mathematics and Geometry \\
Wiedner Hauptstra\ss e 8-10 \\
1040 Vienna \\
Austria \\
dietmar.dorninger@tuwien.ac.at

Helmut L\"anger \\
TU Wien \\
Faculty of Mathematics and Geoinformation \\
Institute of Discrete Mathematics and Geometry \\
Wiedner Hauptstra\ss e 8-10 \\
1040 Vienna \\
Austria, and \\
Palack\'y University Olomouc \\
Faculty of Science \\
Department of Algebra and Geometry \\
17.\ listopadu 12 \\
771 46 Olomouc \\
Czech Republic \\
helmut.laenger@tuwien.ac.at
\end{document}